\newcommand{\IC}{\mathbb{C}}
\newcommand{\IR}{\mathbb{R}}
\newcommand{\question}[1]{\leavevmode{\marginpar{\tiny%
$\hbox to 0mm{\hspace*{-0.5mm}$\leftarrow$\hss}%
\vcenter{\vrule depth 0.1mm height 0.1mm width \the\marginparwidth}%
\hbox to 0mm{\hss$\rightarrow$\hspace*{-0.5mm}}$\\\relax\raggedright #1}}}
\newcommand{\ILL}{\mathscr{L}}
\newcommand{\IHH}{\mathscr{H}}
\newcommand{\dom}{\mathrm{Dom}}
\newcommand{\Id}{{d}}
\newcommand{\f}{\frac}
\newcommand{\N}{\mathbb{N}}
\newcommand{\Z}{\mathbb{Z}}
\newcommand{\slim}{\mathop{\rm{st}\rule[0.5ex]{1ex}{0.1ex}\mathrm{lim}}}
\newcommand\newdot{{\kern.8pt\cdot\kern.8pt}}
\def\nbull{{\raise1.5pt\hbox{\bf .}}}
\theoremstyle{plain}            
\newtheorem{theorem}{theorem}[section]
\newtheorem{Lemma}[theorem]{Lemma}
\newtheorem{Corollary}[theorem]{Corollary}
\newtheorem{Theorem}[theorem]{Theorem}
\newtheorem{Proposition}[theorem]{Proposition}
\theoremstyle{definition}
\newtheorem{Example}[theorem]{Example}
\numberwithin{equation}{section}
\newcommand{\Hmm}[1]{\leavevmode{\marginpar{\tiny%
			$\hbox to 0mm{\hspace*{-0.5mm}$\leftarrow$\hss}%
			\vcenter{\vrule depth 0.1mm height 0.1mm width \the\marginparwidth}%
			\hbox to 0mm{\hss$\rightarrow$\hspace*{-0.5mm}}$\\\relax\raggedright #1}}}
\begin{document}

\begin{titlepage}

\title{Scattering the geometry of weighted graphs}

\author[B. G\"uneysu]{Batu G\"uneysu}

\author[M Keller]{Matthias Keller}

\address{Institut für Mathematik, Humboldt-Universität zu Berlin, 12489 Berlin, Germany}
\email{gueneysu@math.hu-berlin.de}

\address{Institut für Mathematik, Universität Potsdam, 14476 Potsdam, Germany }

\end{titlepage}

\maketitle 

\begin{abstract} Given two weighted graphs $(X,b_k,m_k)$, $k=1,2$ with $b_1\sim b_2$ and $m_1\sim m_2$, we prove a weighted $L^1$-criterion for the existence and completeness of the wave operators $
W_{\pm}(H_{2},H_1, I_{1,2})$, where $H_k$ denotes the natural Laplacian in $\ell^2(X,m_k)$ w.r.t. $(X,b_k,m_k)$ and $I_{1,2}$ the trivial identification of $\ell^2(X,m_1)$ with $\ell^2(X,m_2)$. In particular, this entails a very general criterion for the absolutely continuous spectra of $H_1$ and $H_2$ to be equal.

\end{abstract}

\setcounter{page}{1}
\section{Introduction}

There is a long history on studying the stability of the absolutely continuous spectrum under various types of perturbations of a sufficiently \lq{}well-behaved\rq{} self-adjoint operator $H_0$ on a Hilbert space of the form $L^2(X,\mu)$. Classical results include perturbations of $H_0=-\Delta$ in $L^2(\IR^m)$ by potentials $V:\IR^m\to \IR$  \cite{Re3}, leading to the comparison of $H_0$ with $H=-\Delta+V$. Very sophisticated results for perturbations by potentials have been obtained in the one dimensional case $m=1$ and also in the discrete case for Jacobi matrices \cite{Deift,Killip,Kiselev,Rem} (see e.g. also \cite{LS} and the references therein for perturbations by random potentials).\\
Another well-studied class of perturbations is provided by replacing the space $X$ by some open subset $U\subset X$: here one puts some \lq\lq{}boundary conditions\rq\rq{} on $H_0$ in order to obtain the perturbed self-adjoint operator $H=H_U$ in $L^2(U)$ and compares the absolutely continuous spectra of $H_0$ and $H=H_U$. Concerning this perturbation theory (\lq{}scattering by obstacles\rq{}), we mention \cite{peter}, where so the called Grothendieck factorization principle was introduced as a tool in this context (see also \cite{Demuthetal,Demuth}).\\
A problem in perturbation which is morally rather different from the above problems has been initiated very recently: here, the idea is to assume that the free operator $H_0$ and the measure $\mu$ are defined in terms of the \lq{}geometry\rq{} of the space $X$. Then, keeping the set $X$ fixed, the inital geometry on $X$ is perturbed, leading to a perturbed operator $H$ that typically lives on a Hilbert space of the form $L^2(X,\mu\rq{})$. One then aims to study the absolutely continuous spectrum of $H$ by comparing $H$ with $H_0$. Technical problems arise here through the fact that in general one has $\mu\neq \mu\rq{}$, so that one has to compare operators living in different Hilbert spaces. A prototype of the above situation is provided by Riemannian manifolds: here $X$ is a smooth noncompact manifold which carries a \lq\lq{}nice\rq\rq{} Riemannian metric $g_0$, encoding the free geometry, and the Riemannian metric of interest $g$, encoding the perturbed geometry. One aims to study the absolutely continuous spectrum of the Laplace-Beltrami operator $H=-\Delta_{g}$ acting in $L^2(X,\mu_{g})$ by that of $H_0=-\Delta_{g_0}$ acting in $L^2(X,\mu_{g_0})$. A natural approach in this context is provided by studying the associated wave operators, which is a genuine two-Hilbert-space scattering problem \cite{salo,HPW,GuenThal,BGM,HP} in this case, as the underlying Riemannian volume measures $\mu_{g_0}$ and $\mu_{g}$ are different.\vspace{2mm}

As a contribution to the latter class of problems, in this paper, we aim to study perturbations of the geometry of weighted discrete graphs in terms of the edge and vertex weights. Apart from the work \cite{CT} on trees and perturbations of Jacobi matrices (cf. \cite{BL}  and the references therein), we are not aware of any results in this  direction, so that our results should provide the first systematic study of this topic. Moreover, the stability results we obtain include the existence of the wave operators. \\
Let us be more specific: assume we are given two weighted graphs $(X,m_1, b_1)$ and $(X, m_2,b_2)$ over the countable set $X$. Let $H_k$ denote the natural self-adjoint realization of the weighted Laplacian induced by $(X,m_k, b_k)$ in $\ell^2(X,m_k)$, defined by means of quadratic forms (cf. Section \ref{main} for the precise definitions). If we assume\footnote{Given a set $A$ and functions $\phi_1,\phi_2:A\to [0,\infty)$ we write $\phi_1\sim \phi_2$, if there exist constants $c,C>0$ such that for all $a\in A$ one has $c\phi_1(a)\leq \phi_2(a)\leq C\phi_1(a)$.} $m_1\sim m_2$, then one has $\ell^2(X,m_1)=\ell^2(X,m_2)$ and we get the trivial bounded and bijective identification operator
$$
I_{1,2}: \ell^2(X,m_1)\longrightarrow\ell^2(X,m_2), \quad f\longmapsto f.
$$
In this paper we are concerned with finding very general assumptions on the distortion of $m_2$ from $m_1$ and of $b_2$ from $b_1$ that imply the existence and completeness of the wave operators
$$
W_{\pm}(H_{2},H_1, I_{1,2})=\slim_{t\to\pm\infty}\exp(itH_{2})I_{1,2}\exp(-itH_{1})\pi_{\mathrm{ac}}(H_1).
$$
In particular, we want to allow arbitrary, possibly locally infinite weighted graphs. As the existence and completeness of the $W_{\pm}(H_{2},H_1, I_{1,2})$ implies that the absolutely continuous spectra of the $H_k$\rq{}s are equal, such a  result provides a perturbative of calculating new absolutely continuous spectra from known ones, as long as the new weighted geometry does not differ too much from the old one, in a way to be made precise below.\\
In fact, with the function
$$
m_{1,2}:X\longrightarrow (0,\infty), \quad m_{1,2}(x)=m_1(x)/m_2(x),
$$
which measures the multiplicative distortion of one vertex weight function from the other, and likewise
$$
b_{1,2}:X\times X\longrightarrow [0,\infty),\quad b_{1,2}(x,y):= \begin{cases}&b_1(x,y)/b_2(x,y),\text{ if $b_2(x,y)\ne 0$}\\
&1, \text{else},
\end{cases}
$$
a function that measures the multiplicative distortion of one edge weight function from the other, our main result reads as follows (cf. Theorem \ref{main} below):\vspace{2mm}

\emph{Assume $m_1\sim m_2$, $b_1\sim b_2$, and that there exists $s>0$ such that for both $k=1,2$ one has
\begin{align}\label{popo}
&\sum_{x\in X}\big|m_{1,2}(x)^{\frac{1}{2}}-m_{1,2}(x)^{-\frac{1}{2}}\big|\exp(-sH_k)(x,x)m_k(x) <\infty,\\
& \sum_{x\in X}\sum_{y\in X}\big|b_{1,2}(x,y)^{\frac{1}{2}}-b_{1,2}(x,y)^{-\frac{1}{2}}\big| \big(\exp(-sH_k)(x,x)+\exp(-sH_k)(y,y)\big)b_k(x,y)<\infty.
\end{align}
Then the wave operators 
$$
W_{\pm}(H_{2},H_1, I)=\slim_{t\to\pm\infty}\exp(itH_{2})I_{1,2}\exp(-itH_{1})\pi_{\mathrm{ac}}(H_1)
$$
exist and are complete. Moreover, the $W_{\pm}\big(H_{2},H_1, I_{1,2}\big)$ are partial isometries with initial space $\mathrm{Ran} \: \pi_{\mathrm{ac}}(H_1)$ and final space $\mathrm{Ran} \: \pi_{\mathrm{ac}}(H_2)$, and one has $\mathrm{spec}_{\mathrm{ac}}(H_1)=\mathrm{spec}_{\mathrm{ac}}(H_2)$. } \vspace{2mm}

Using the trivial bounds 
$$
\exp(-sH_k)(z,z) \leq 1/m_k(z)\quad\text{ for all $s\geq 0$, $ z\in X$},
$$
it follows that the assumptions (\ref{popo}) are satisfied, if for both $k$ one has
\begin{align*}
&\sum_{x\in X}\big|m_{1,2}(x)^{\frac{1}{2}}-m_{1,2}(x)^{-\frac{1}{2}}\big| <\infty, \\
& \sum_{x\in X}\sum_{y\in X}\big|b_{1,2}(x,y)^{\frac{1}{2}}-b_{1,2}(x,y)^{-\frac{1}{2}}\big|\big(m_k(x)^{-1}+m_k(y)^{-1}\big)b_k(x,y)  <\infty,
\end{align*}
a criterion which only involves geometric quantities (cf. Corollary \ref{main2} below).

\section{Main results}

In the sequel, we consider all our spaces of functions to be complex-valued. Let $X$ be a countable set, equipped with its discrete topology. Let $b:X\times X\to [0,\infty)$ be graph over $X$, that is, $b$ is a symmetric function which is zero on the diagonal and which satisfies
$$
\sum_{y\in X} b(x,y)<\infty\quad\text{ for all $y\in X$}.
$$
In other words, all $(x,y)$ with $b(x,y)>0$ are considered as being connected by an edge, weighted according to the value of $b(x,y)$. Considering $b$ as a measure on $X\times X$ in the obvious way (the edge measure), we then get the Hilbert space $\ell^2(X\times X,b)$, noting that the scalar pruduct is given by
\begin{align*}
\langle{\alpha},{\beta }\rangle_{b}=\frac{1}{2} \sum_{X\times X} \alpha \cdot\overline{\beta}\cdot b=
\frac{1}{2} \sum_{x,y\in X} \alpha(x,y) \overline{\beta(x,y)} b(x,y).
\end{align*}

 We define a nonnegative sesqui-linear form $\mathcal{Q}_b$ by means of  
\begin{align*}
    \mathcal{Q}_b:C_c(X)\times C_c(X)\longrightarrow \IC, \quad 
    \mathcal{Q}_b(f,g)=\langle{d f},{dg }\rangle_{b},
\end{align*}
with the discrete differential
\begin{align*}
    &d:C(X)\longrightarrow  C(X\times X) ,\quad df(x,y)=f(x)-f(y).
\end{align*}
Given in addition function $m:X\to (0,\infty)$, again to be considered as a measure on $X$ (the vertex measure), the triple $(X,b,m)$ is called a \emph{weighted graph}. There is the complex Hilbert space $\ell^{2}(X,m)$ of complex-valued functions on $X$ which are square summable with respect to $m$. Explicitly, the scalar product on $\ell^{2}(X,m)$ is given by
\begin{align*}
\langle{f},{g }\rangle_{m}=  \sum_{X }f\cdot\overline{g}\cdot m=
\sum_{x\in X}f(x) \overline{g(x)}m(x).
\end{align*}
Then $\mathcal{Q}_b$ is closable in $\ell^{2}(X,m)$ and its closure, a densely defined closed nonnegative sesqui-linear form in $\ell^{2}(X,m)$, is denoted with $Q_{b,m}$. For all $f ,g\in \dom(Q_{b,m})$ one has
$$
Q_{b,m}(f,g)=\langle{d f},{dg  }\rangle_{b}.
$$
We denote with $H_{b,m}$ the nonnegative self-adjoint operator in $\ell^{2}(X,m)$ corresponding to $Q_{b,m}$. Then one has $\dom(Q_{b,m})=\dom(\sqrt{H_{b,m}})$, and for all $f$, $g$ in the latter domain of definition one has
$$
Q_{b,m}(f,g)=\left\langle \sqrt{H_{b,m}}f,\sqrt{H_{b,m}}g\right\rangle_{m}.
$$

\vspace{1.2mm}

Consider now two weighted graphs $(X,b_k,m_{k})$, $k=1,2$, over $X$. We denote the corresponding scalar products, forms and operators by
$\langle{\cdot},{\cdot }\rangle_{b_k}$, $\langle{\cdot},{\cdot
}\rangle_{m_{k}}$, $Q_{k}:=Q_{b_k,m_k}$ and $H_{k}:=H_{b_k,m_k}$. If we assume $m_1\sim m_2$, then one has $\ell^2(X,m_1)=\ell^2(X,m_2)$ and there is the trivial bounded and bijective identification operator
$$
I_{1,2}: \ell^2(X,m_1)\longrightarrow\ell^2(X,m_2), \quad f\longmapsto f.
$$
One has $I_{1,2}^{-1}=I_{2,1}$, and $I_{1,2}^*f(x)=m_{1,2}(x)f(x)$, where
$$
m_{1,2}:X\longrightarrow (0,\infty), \quad m_{1,2}(x)=m_1(x)/m_2(x).
$$
Define
$$
b_{1,2}:X\times X\longrightarrow [0,\infty),\quad b_{1,2}(x,y):= \begin{cases}&b_1(x,y)/b_2(x,y),\text{ if $b_2(x,y)\ne 0$}\\
&1, \text{else}.
\end{cases}
$$
Note that $1/m_{1,2}=m_{2,1}$ and $1/b_{1,2}=b_{2,1}$, and that if $m_1\sim m_2$ and $b_1\sim b_2$, then we get
$$
 I_{1,2}(\dom(Q_1))=\dom(Q_2), \quad \langle{\cdot},{\cdot  }\rangle_{b_2}=\langle{b_{1,2}(\cdot)},{\cdot  }\rangle_{b_1}.
$$

Here comes our main result:

\begin{Theorem}\label{main} Assume $m_1\sim m_2$, $b_1\sim b_2$, and that there exists $s>0$ such that for both $k=1,2$ one has
\begin{align*}
&\sum_{x\in X}\big|m_{1,2}(x)^{\frac{1}{2}}-m_{1,2}(x)^{-\frac{1}{2}}\big|\exp(-sH_k)(x,x)m_k(x) <\infty,\\
& \sum_{x\in X}\sum_{y\in X}\big|b_{1,2}(x,y)^{\frac{1}{2}}-b_{1,2}(x,y)^{-\frac{1}{2}}\big| \big(\exp(-sH_k)(x,x)+\exp(-sH_k)(y,y)\big)b_k(x,y)<\infty.
\end{align*}
Then the wave operators 
$$
W_{\pm}(H_{2},H_1, I_{1,2})=\slim_{t\to\pm\infty}\exp(itH_{2})I_{1,2}\exp(-itH_{1})\pi_{\mathrm{ac}}(H_1)
$$
exist and are complete, where $\pi_{\mathrm{ac}}(H_k)$ denotes the projection onto the $H_k$-absolutely continuous subspace of $\ell^2(X,m_k)$. Moreover, the $W_{\pm}\big(H_{2},H_1, I_{1,2}\big)$ are partial isometries with inital space $\mathrm{Ran} \: \pi_{\mathrm{ac}}(H_1)$ and final space $\mathrm{Ran} \: \pi_{\mathrm{ac}}(H_2)$, and one has $\mathrm{spec}_{\mathrm{ac}}(H_1)=\mathrm{spec}_{\mathrm{ac}}(H_2)$.
\end{Theorem}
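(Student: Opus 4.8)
The plan is to reduce the assertion to a two‑Hilbert‑space trace‑class scattering criterion of Belopol'skii--Birman type (the two‑space version of the Kato--Birman theorem), and then to verify its hypotheses by explicit estimates on the heat semigroups $\exp(-sH_k)$. Concretely, it suffices to prove
\[
\exp(-sH_2)\,I_{1,2}-I_{1,2}\,\exp(-sH_1)\in\IS_1,\qquad
\exp(-sH_1)\big(I_{1,2}^{\ast}I_{1,2}-\mathrm{Id}\big)\exp(-sH_1)\in\IS_1 ,
\]
and the same with $1$ and $2$ interchanged; applying Belopol'skii--Birman to the bounded self‑adjoint operators $\exp(-sH_k)$, together with the invariance principle to return to $H_k$, then gives existence and completeness of $W_\pm(H_2,H_1,I_{1,2})$ and of $W_\pm(H_1,H_2,I_{2,1})=W_\pm(H_2,H_1,I_{1,2})^{\ast}$. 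Completeness forces these to be partial isometries with initial space $\mathrm{Ran}\,\pi_{\mathrm{ac}}(H_1)$ and final space $\mathrm{Ran}\,\pi_{\mathrm{ac}}(H_2)$, and yields $\mathrm{spec}_{\mathrm{ac}}(H_1)=\mathrm{spec}_{\mathrm{ac}}(H_2)$. The conditions on $I_{1,2}^{\ast}I_{1,2}-\mathrm{Id}$ are the easy ones: $I_{1,2}^{\ast}I_{1,2}$ and $I_{1,2}I_{1,2}^{\ast}$ are multiplication operators built from $m_{1,2}$, and since $m_1\sim m_2$ keeps $m_{1,2}$ between two positive constants one has $|m_{1,2}(x)-1|\asymp|m_{1,2}(x)^{1/2}-m_{1,2}(x)^{-1/2}|$; hence $\mathrm{tr}\big(\exp(-sH_k)|M_{m_{1,2}-1}|\big)=\sum_x|m_{1,2}(x)-1|\exp(-sH_k)(x,x)m_k(x)<\infty$ by the first assumption, and distributing $|M_{m_{1,2}-1}|^{1/2}$ over the two sides gives the claim.

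For the first, main condition I would use Duhamel,
\[
\exp(-sH_2)I_{1,2}-I_{1,2}\exp(-sH_1)=-\int_0^{s}\exp(-rH_2)\,\big(H_2I_{1,2}-I_{1,2}H_1\big)\,\exp\big(-(s-r)H_1\big)\,dr ,
\]
together with the form description of the perturbation. Using $\langle\,\cdot,\cdot\,\rangle_{b_2}=\langle b_{1,2}(\cdot),\cdot\,\rangle_{b_1}$ and $d(\varrho g)(x,y)=\tfrac{\varrho(x)+\varrho(y)}{2}\,dg(x,y)+\tfrac{\varrho(x)-\varrho(y)}{2}\,(g(x)+g(y))$, where $\varrho$ is the bounded, bounded‑below vertex distortion with $I_{1,2}^{\ast}g=\varrho g$, one obtains on a common core
\[
\big\langle(H_2I_{1,2}-I_{1,2}H_1)f,\,g\big\rangle_{m_2}=\big\langle \nu\, df,\,dg\big\rangle_{b_1}-\big\langle df,\ \tau(x,y)\,(g(x)+g(y))\big\rangle_{b_1},
\]
with $\nu(x,y):=b_{1,2}(x,y)-\tfrac{\varrho(x)+\varrho(y)}{2}$ (symmetric) and $\tau(x,y):=\tfrac{\varrho(x)-\varrho(y)}{2}$ (antisymmetric); crucially $|\nu(x,y)|\lesssim|b_{1,2}(x,y)^{1/2}-b_{1,2}(x,y)^{-1/2}|+|m_{1,2}(x)^{1/2}-m_{1,2}(x)^{-1/2}|+|m_{1,2}(y)^{1/2}-m_{1,2}(y)^{-1/2}|$, and $|\tau|$ is bounded by the last two terms — so only the edge‑ and vertex‑distortions enter. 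Inserting this into the integrand and splitting each $b_1$‑sum into its oriented‑edge contributions exhibits $\exp(-rH_2)(H_2I_{1,2}-I_{1,2}H_1)\exp(-(s-r)H_1)$, for each $r\in(0,s)$, as a trace‑norm convergent sum of rank‑one operators, the edge‑$(x,y)$ piece having $\IS_1$‑norm $\lesssim |\nu(x,y)|\,b_1(x,y)\,\rho^{(2)}_{r}(x,y)\,\rho^{(1)}_{s-r}(x,y)$ (and an analogous vertex piece with $\tau$), where
\[
\rho^{(k)}_u(x,y):=\Big\|\exp(-uH_k)\Big(\tfrac{\delta_x}{m_k(x)}-\tfrac{\delta_y}{m_k(y)}\Big)\Big\|_{m_k}\le \exp(-2uH_k)(x,x)^{1/2}+\exp(-2uH_k)(y,y)^{1/2}.
\]
Simply bounding $\exp(-2uH_k)(x,x)\le 1/m_k(x)$ here already yields Corollary~\ref{main2}; obtaining Theorem~\ref{main} itself is what is delicate.

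The main obstacle is to integrate $|\nu(x,y)|b_1(x,y)\rho^{(2)}_r(x,y)\rho^{(1)}_{s-r}(x,y)$ over $r\in(0,s)$ and sum over oriented edges so that the outcome is bounded by the \emph{heat‑kernel‑weighted} sums in the hypothesis rather than by the coarser geometric ones in which $\exp(-sH_k)(x,x)$ is replaced by $1/m_k(x)$. Near $r=0$ only $\exp(-(s-r)H_1)$ contributes decay and near $r=s$ only $\exp(-rH_2)$ does, so the two factors cannot be decoupled (a Cauchy--Schwarz in $r$ is far too wasteful); one has to split $(0,s)$ at $s/2$ and, on each half, keep the heat‑kernel‑diagonal bound on the non‑degenerate factor (using $\exp(-2uH_k)(x,x)\le\exp(-sH_k)(x,x)$ for $2u\ge s$, i.e. monotonicity of the diagonal) while controlling the degenerate factor through the smoothing estimate $\rho^{(k)}_u(x,y)^2\lesssim\big(u\,b_k(x,y)\big)^{-1}$, which follows from $\|\sqrt{H_k}\exp(-uH_k)\|\lesssim u^{-1/2}$ and the fact that $\tfrac{\delta_x}{m_k(x)}-\tfrac{\delta_y}{m_k(y)}$ lies in $b_k(x,y)^{-1}\,\mathrm{Ran}(\sqrt{H_k})$; the log‑convexity of $u\mapsto\log\exp(-uH_k)(x,x)$ is also useful. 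Balancing these so that each $b_k(x,y)$ in the final sum is paired with a full diagonal heat kernel $\exp(-sH_k)(\cdot,\cdot)$ while the residual $r$‑integral stays an integrable singular integral, and then invoking $|b_{1,2}^{1/2}-b_{1,2}^{-1/2}|\asymp|b_{1,2}-1|$, $|m_{1,2}^{1/2}-m_{1,2}^{-1/2}|\asymp|m_{1,2}-1|$ and $b_1\sim b_2$, reproduces exactly the two sums of the hypothesis for $k=1$ and $k=2$; the vertex term, carrying one derivative $d$ (hence one $\rho^{(k)}$ factor) and one plain evaluation of $g$ (hence a single $\exp(-uH_k)(\cdot,\cdot)^{1/2}$), is controlled the same way by the vertex sum. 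This is the step where the precise heat‑kernel form of the assumptions is indispensable, and it is the technical heart of the argument; once it is in place one has $\exp(-sH_2)I_{1,2}-I_{1,2}\exp(-sH_1)\in\IS_1$, and the Belopol'skii--Birman criterion delivers the full statement of Theorem~\ref{main}.
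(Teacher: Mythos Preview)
Your reduction to Belopol'skii--Birman is the right framework, but there is a substantive difference between your version and the one the paper actually uses, and it is precisely this difference that creates the ``main obstacle'' you struggle with. You ask for
\[
\exp(-sH_2)\,I_{1,2}-I_{1,2}\,\exp(-sH_1)\in\IS_1
\]
and then attack it via Duhamel. The paper instead uses the variant of Belopol'skii--Birman in which one only needs a trace class operator $T$ satisfying, for $f_k\in\dom(H_k)$,
\[
\langle f_2,Tf_1\rangle_{m_2}=\langle H_2f_2,\;e^{-sH_2}I_{1,2}e^{-sH_1}f_1\rangle_{m_2}-\langle f_2,\;e^{-sH_2}I_{1,2}e^{-sH_1}H_1f_1\rangle_{m_2},
\]
i.e.\ the ``commutator'' $H_2I_{1,2}-I_{1,2}H_1$ sandwiched by \emph{two full} heat semigroups, not the $r$--$(s-r)$ split forced on you by Duhamel. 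This matters: with full smoothing on both sides the paper can write, via the algebraic identity $1-b_{1,2}^{-1}=|\tilde b_{1,2}|^{1/2}\,\mathrm{sgn}(\tilde b_{1,2})b_{1,2}^{-1/2}\,|\tilde b_{1,2}|^{1/2}$ (and its vertex analogue),
\[
T=\big(\widehat{S_{1,2;2}}\widehat P^{(2)}_s\big)^{\!*}\widehat U_{1,2}\,\widehat{S_{1,2;1}}\widehat P^{(1)}_s
-\big(S_{1,2;2}P^{(2)}_s\big)^{\!*}U_{1,2}\,S_{1,2;1}P^{(1)}_{s/2}\,H_1P^{(1)}_{s/2},
\]
where $\widehat P^{(k)}_s=d_kP^{(k)}_s$. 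Each factor $\widehat{S_{1,2;k}}\widehat P^{(k)}_s$ and $S_{1,2;k}P^{(k)}_s$ is then shown to be Hilbert--Schmidt, and its HS norm squared is \emph{exactly} one of the two sums in the hypothesis (for the edge piece one uses the kernel bound $\sum_z|\widehat P^{(k)}_s(x,y,z)|^2m_k(z)\le 2P^{(k)}_{2s}(x,x)+2P^{(k)}_{2s}(y,y)$). So the hypothesis sums are not obtained by a delicate balancing act; they are the HS norms of the factors in a two--factor decomposition of $T$.

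Your Duhamel route, by contrast, forces you to control $\int_0^s \rho^{(2)}_r(x,y)\,\rho^{(1)}_{s-r}(x,y)\,dr$. Your proposed fix---keep the diagonal heat bound on the non-degenerate factor and use $\rho^{(k)}_u\lesssim (u\,b_k)^{-1/2}$ on the degenerate one---yields, after the $r$-integration, edge contributions of size
\[
|\nu(x,y)|\,b_1(x,y)^{1/2}\,\big(e^{-sH_2}(x,x)+e^{-sH_2}(y,y)\big)^{1/2}
\]
(and the symmetric piece with indices $1,2$ swapped). To sum this you would have to Cauchy--Schwarz into $\big(\sum|\nu|\,b_1\big)^{1/2}\big(\sum|\nu|(e^{-sH_2}(x,x)+e^{-sH_2}(y,y))\big)^{1/2}$, but neither factor is controlled by the hypothesis: the first lacks the heat-kernel weight and the second lacks the $b_k$ weight, and there is no a priori lower bound on $e^{-sH_k}(x,x)$ or on $b_k(x,y)$ that would let you pass between them. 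So as written the argument closes for Corollary~\ref{main2} (as you note) but not for Theorem~\ref{main}. The remedy is to abandon Duhamel and use the sandwiched form of the criterion together with the square-root factorisation above; that is what the paper does.
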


Noting that
$$
\exp(-sH_k)(z,z) \leq 1/m_k(z)\quad\text{ for all $s\geq 0$, $ z\in X$},
$$
a trivial consequence of $\exp(-sH_k)$ being a self-adjoint contraction in $\ell^2(X,m_k)$,
we get:

\begin{Corollary}\label{main2} Assume $m_1\sim m_2$, $b_1\sim b_2$, and that for both $k=1,2$ one has
\begin{align*}
&\sum_{x\in X}\big|m_{1,2}(x)^{\frac{1}{2}}-m_{1,2}(x)^{-\frac{1}{2}}\big| <\infty, \\
& \sum_{x\in X}\sum_{y\in X}\big|b_{1,2}(x,y)^{\frac{1}{2}}-b_{1,2}(x,y)^{-\frac{1}{2}}\big|\big(m_k(x)^{-1}+m_k(y)^{-1}\big)b_k(x,y)  <\infty.
\end{align*}
Then the wave operators 
$$
W_{\pm}(H_{2},H_1, I_{1,2})=\slim_{t\to\pm\infty}\exp(itH_{2})I_{1,2}\exp(-itH_{1})\pi_{\mathrm{ac}}(H_1)
$$
exist and are complete. Moreover, the $W_{\pm}\big(H_{2},H_1, I_{1,2}\big)$ are partial isometries with inital space $\mathrm{Ran} \: \pi_{\mathrm{ac}}(H_1)$ and final space $\mathrm{Ran} \: \pi_{\mathrm{ac}}(H_2)$, and one has $\mathrm{spec}_{\mathrm{ac}}(H_1)=\mathrm{spec}_{\mathrm{ac}}(H_2)$.
\end{Corollary}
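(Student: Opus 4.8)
The plan is to obtain Corollary \ref{main2} as an immediate consequence of Theorem \ref{main}: I would check that the two summability hypotheses of the corollary force the two summability hypotheses of the theorem to hold, and then quote the conclusion of the theorem verbatim. The only extra ingredient is the pointwise heat-kernel bound
\[
\exp(-sH_k)(z,z)\leq 1/m_k(z),\qquad s\geq 0,\ z\in X,
\]
already stated in the excerpt; I would first recall why it is true. Writing $\delta_z$ for the indicator of $\{z\}$ and using the normalization $(\exp(-sH_k)f)(x)=\sum_{y}\exp(-sH_k)(x,y)f(y)m_k(y)$ of the kernel, one computes $\langle \delta_z,\exp(-sH_k)\delta_z\rangle_{m_k}=\exp(-sH_k)(z,z)\,m_k(z)^2$ while $\|\delta_z\|_{m_k}^2=m_k(z)$; since $\exp(-sH_k)$ is a nonnegative self-adjoint contraction on $\ell^2(X,m_k)$, the estimate $0\leq\langle \delta_z,\exp(-sH_k)\delta_z\rangle_{m_k}\leq\|\delta_z\|_{m_k}^2$ yields the displayed bound.

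With this in hand I would estimate the two sums of Theorem \ref{main} termwise. In the first sum, $\exp(-sH_k)(x,x)m_k(x)\leq 1$, so
\[
\sum_{x\in X}\big|m_{1,2}(x)^{\frac12}-m_{1,2}(x)^{-\frac12}\big|\exp(-sH_k)(x,x)m_k(x)\ \leq\ \sum_{x\in X}\big|m_{1,2}(x)^{\frac12}-m_{1,2}(x)^{-\frac12}\big|\ <\ \infty .
\]
In the second sum, $\exp(-sH_k)(x,x)+\exp(-sH_k)(y,y)\leq m_k(x)^{-1}+m_k(y)^{-1}$, hence
\begin{align*}
&\sum_{x\in X}\sum_{y\in X}\big|b_{1,2}(x,y)^{\frac12}-b_{1,2}(x,y)^{-\frac12}\big|\big(\exp(-sH_k)(x,x)+\exp(-sH_k)(y,y)\big)b_k(x,y)\\
&\qquad\leq\ \sum_{x\in X}\sum_{y\in X}\big|b_{1,2}(x,y)^{\frac12}-b_{1,2}(x,y)^{-\frac12}\big|\big(m_k(x)^{-1}+m_k(y)^{-1}\big)b_k(x,y)\ <\ \infty .
\end{align*}
Thus the hypotheses of Theorem \ref{main} are met (for every $s>0$), and all assertions of Corollary \ref{main2} --- existence and completeness of $W_{\pm}(H_2,H_1,I_{1,2})$, the partial-isometry description with the stated initial and final spaces, and $\mathrm{spec}_{\mathrm{ac}}(H_1)=\mathrm{spec}_{\mathrm{ac}}(H_2)$ --- follow directly.

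There is no genuine obstacle here: the corollary is a strict weakening of Theorem \ref{main}, with all analytic content residing in the theorem. The only point deserving a line of care is aligning the normalization of the kernel $\exp(-sH_k)(x,y)$ with the one under which the contractivity bound $\exp(-sH_k)(z,z)\leq 1/m_k(z)$ is invoked; with the kernel taken relative to the measure $m_k$ as above, this is automatic.
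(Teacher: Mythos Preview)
Your proposal is correct and follows exactly the paper's own approach: the paper simply notes the heat-kernel bound $\exp(-sH_k)(z,z)\leq 1/m_k(z)$ (justified as a trivial consequence of $\exp(-sH_k)$ being a self-adjoint contraction) and then states the corollary without further argument. Your write-up just makes the implicit termwise comparison explicit, which is fine.
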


\section{Applications of Corollary \ref{main2}}

Recall that a weighted graph $ (X,b,m) $ is said to have \emph{standard weights}, if $ b  $ takes values in $ \{0,1\} $ and $ m $ is constantly $ 1 $. Observe, that graphs with standard weights are always locally finite. Furthermore, let the uniform closure of the functions with finite support on a discrete set $ Y $ be denoted by $ C_{0}(Y) $.

\begin{Corollary}\label{main3} Let $ (X,b_{1},m_{1}) $ be a graph with standard weights and $ \beta\in C_{0}(X\times X) $  symmetric, zero on the diagonal and $ \mu\in C_{0}(X) $ be such that with
	\begin{align*}
	b_{2}:=b_{1}+\beta\quad\mbox{and}\quad m_{2}:=m_{1}+\mu,
	\end{align*}
$b_2$ is nonnegative and  $m_2$ is strictly positive. Let, furthermore, $ S_{n} \subseteq X$, $ n\in \N_{0} $, be finite pairwise disjoint subsets such that $ X=\bigcup_{n}S_{n} $ and define the averages over $ S_{n} $ to be
	\begin{align*}
	\beta_n:= \frac{1}{\# S_{n}}
	\sum_{x\in S_{n},y\in X}|\beta(x,y)|\quad\mbox{and}\quad
	\mu_n:= \frac{1}{\# S_{n}}
	\sum_{x\in S_{n}}|\mu(x)|,\qquad n\in \N_{0}.
	\end{align*}
	Assume	
	\begin{align*}
	\sum_{n\in \N_{0}}\#S_{n}\beta_{n} <\infty \quad\mbox{and}\quad
	\sum_{n\in \N_{0}}\#S_{n}\mu_{n} <\infty.
	\end{align*}
	Then the wave operators 
	$$
	W_{\pm}(H_{2},H_1, I_{1,2})=\slim_{t\to\pm\infty}\exp(itH_{2})I_{1,2}\exp(-itH_{1})\pi_{\mathrm{ac}}(H_1)
	$$
	exist and are complete. Moreover, the $W_{\pm}\big(H_{2},H_1, I_{1,2}\big)$ are partial isometries with initial space $\mathrm{Ran} \: \pi_{\mathrm{ac}}(H_1)$ and final space $\mathrm{Ran} \: \pi_{\mathrm{ac}}(H_2)$, and one has $\mathrm{spec}_{\mathrm{ac}}(H_1)=\mathrm{spec}_{\mathrm{ac}}(H_2)$.
\end{Corollary}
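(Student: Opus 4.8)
The plan is to obtain Corollary \ref{main3} as a special case of Corollary \ref{main2}; thus the entire task is to verify, for the pair $(X,b_1,m_1)$, $(X,b_2,m_2)$, the two summability hypotheses of the latter. I would start by recording the comparability assumptions. Since $m_1\equiv 1$ while $m_2=1+\mu$ is strictly positive with $\mu\in C_0(X)$ bounded and vanishing at infinity, $m_2$ is bounded above and below by strictly positive constants, so $m_1\sim m_2$. Since $b_1$ has standard weights, the same reasoning applied on the edge set $\{b_1=1\}$, where $b_2=1+\beta$ with $\beta\in C_0(X\times X)$, gives $b_1\sim b_2$. In particular all four of $m_1,m_2,b_1,b_2$ are pinched between strictly positive constants, so $m_k(x)^{-1}+m_k(y)^{-1}$ and $b_k(x,y)$ are uniformly bounded in $x,y$ for $k=1,2$; these crude bounds will absorb every $k$-dependent factor occurring in Corollary \ref{main2}.

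The computational core is the pointwise control of the two distortion brackets. Because $m_1$ and $b_1$ equal $1$ wherever they matter, one has $m_{1,2}(x)=1/(1+\mu(x))$ and, on $\{b_1=1\}$, $b_{1,2}(x,y)=1/(1+\beta(x,y))$, whereas off $\{b_1=1\}$ one has $b_{1,2}\equiv 1$ (the ``else'' branch) and $b_k(x,y)=0$ for $k=1,2$. Using the elementary identity $a^{1/2}-a^{-1/2}=(a-1)\,a^{-1/2}$ together with the lower bounds from the first step, one gets a constant $C>0$ with
\begin{align*}
\big|m_{1,2}(x)^{1/2}-m_{1,2}(x)^{-1/2}\big|\le C\,|\mu(x)|,\qquad \big|b_{1,2}(x,y)^{1/2}-b_{1,2}(x,y)^{-1/2}\big|\le C\,|\beta(x,y)|,
\end{align*}
the second bound on $\{b_1=1\}$ and trivially (both sides $0$) off it.

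Substituting these into the two sums of Corollary \ref{main2} and bounding the remaining factors by the constants from the first step, the first sum is $\le C'\sum_{x\in X}|\mu(x)|$ and the second is $\le C'\sum_{x\in X}\sum_{y\in X}|\beta(x,y)|$, uniformly in $k=1,2$. It then remains to identify these two series with the hypotheses: since the $S_n$ are finite, pairwise disjoint and exhaust $X$, the definitions of $\mu_n$ and $\beta_n$ give
\begin{align*}
\sum_{x\in X}|\mu(x)| &=\sum_{n\in\N_0}\sum_{x\in S_n}|\mu(x)|=\sum_{n\in\N_0}\#S_n\,\mu_n,\\
\sum_{x\in X}\sum_{y\in X}|\beta(x,y)| &=\sum_{n\in\N_0}\sum_{x\in S_n}\sum_{y\in X}|\beta(x,y)|=\sum_{n\in\N_0}\#S_n\,\beta_n,
\end{align*}
both finite by assumption. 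Hence both hypotheses of Corollary \ref{main2} hold, and Corollary \ref{main3} follows from it.

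The step I expect to require the most care is the verification that $b_1\sim b_2$: one must make sure the perturbation $\beta$ neither creates nor destroys edges, i.e. that $\beta$ is supported on the edge set of $b_1$ and bounded away from $-1$ there; the hypotheses $\beta\in C_0(X\times X)$ and $b_2\ge 0$, together with $b_1$ having standard weights, are exactly what makes the comparability — and hence the applicability of Corollary \ref{main2} — automatic once the creation of new edges is excluded. Everything else is elementary: the two one-variable inequalities above and the bookkeeping with the $S_n$-decomposition.
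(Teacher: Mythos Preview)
Your approach is essentially the paper's: verify the two summability hypotheses of Corollary~\ref{main2} by bounding the distortion brackets pointwise by $C|\mu|$ and $C|\beta|$, absorb the uniformly bounded factors $m_k^{-1}$ and $b_k$, and then regroup the resulting $\ell^1$-sums via the $S_n$-decomposition. The only cosmetic difference is that the paper obtains the pointwise bounds by a Taylor expansion of $\sqrt{1+t}$ after first using $\beta,\mu\in C_0$ to reduce to $|\beta|,|\mu|\le 1/2$, whereas your identity $a^{1/2}-a^{-1/2}=(a-1)\,a^{-1/2}$ together with the global comparability bounds handles all points at once and is a bit cleaner.

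Your closing caveat is well placed and is in fact the one genuine soft spot. The hypotheses as written do not force $\beta$ to vanish off $\{b_1=1\}$, nor do they exclude $\beta(x,y)=-1$ at an edge; in either case $b_1\sim b_2$ fails and Corollary~\ref{main2} is not applicable. The paper's own proof does not address this point either---it silently works as if $\beta$ lives on the edge set of $b_1$ and $b_2>0$ there. Under that (intended) reading, both arguments go through; without it, neither does, and the issue lies with the statement rather than with your proof strategy.
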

\begin{proof}
	Since $ \beta $ and $ \mu $ are  $ C_{0} $-functions, there are only finitely many values larger than $ 1/2 $ in absolute value. So assume w.l.o.g. $  |\beta|,|\mu |\leq 1/2 $.
	We calculate using the Taylor expansion of the square root  
	\begin{align*}
	b_{1,2}^{\frac{1}{2}}-b_{1,2}^{-\frac{1}{2}}&=\frac{(1+\beta)^{1/2}-1}{(1+\beta)^{1/2}}=\frac{1-\beta^{2}-(1+\beta)^{1/2}}{1-\beta^{2}}=-\frac{\beta}{2(1-\beta^{2})}+O(\beta^{2})\\
	m_{1,2}^{\frac{1}{2}}-m_{1,2}^{-\frac{1}{2}}&=-\frac{\mu}{2(1-\mu^{2})}+O(\mu^{2}),
	\end{align*}
where we use the Landau symbol $ O $ for real valued functions on a discrete sets $ Y $ with respect to the limit to the point  $ \infty_Y $ in the Alexandrov compactification $Y\cup \{\infty_Y\} $ of $ Y $. Hence,
	\begin{align*}
	\sum_{x\in X}\big|m_{1,2}(x)^{\frac{1}{2}}-m_{1,2}(x)^{-\frac{1}{2}}\big| \leq \sum_{n\in\N_0}\# S_{n}(\mu_{n}+O(\mu_{n}^{2}))<\infty.
	\end{align*}
 Since we assumed  $  |\beta|,|\mu |\leq 1/2 $, we infer $ m_{k}^{-1},b_{k}\leq 2 $, $ k=1,2 $	 and, therefore,
 \begin{align*}
 \sum_{x,y\in X}\big|b_{1,2}(x,y)^{\frac{1}{2}}-b_{1,2}(x,y)^{-\frac{1}{2}}\big|\big(m_k(x)^{-1}+m_k(y)^{-1}\big)b_k(x,y) \leq  \sum_{n\in\N_0}\# S_{n}(\beta_{n}+O(\beta_{n}^{2}))<\infty.
 \end{align*}
	Thus, we deduce the statement by Corollary~\ref{main2}.
\end{proof}
Next, we discuss some particular examples where we can apply the corollary above directly. 

\begin{Example}[The Euclidean lattice] Consider the Laplacian $H_1$ with standard weights on  $ \Z^{d} $. It is folklore that
	the spectrum of this operator is purely absolutely continuous 
	\begin{align*}
	\mathrm{spec}(H_1)=	\mathrm{spec}_{\mathrm{ac}}(H_1)=[0,4d],
	\end{align*} 
	which can be proven using the discrete Fourier transform.	
	Denote the distance spheres with respect to the supremum norm on $ \Z^{d} $ by $ S_{n} $, $ n\in\N_{0} $. Then,
	\begin{align*}
	\# S_{n}\sim n^{d-1},	
	\end{align*} 
	and all $ \beta\in C_{0}(\Z^{d}\times \Z^{d}) $ and $ \mu\in C_{0}(\Z^{d}) $ with 
	\begin{align*}
	\beta_{n}, \mu_n\in O(n^{-(d+\varepsilon)})
	\end{align*}
	for some $ \varepsilon>0$ satisfy the assumptions of Corollary~\ref{main3}, in particular, for those $\beta, \mu$ one has
\begin{align*}
	\mathrm{spec}(H_2)=	\mathrm{spec}_{\mathrm{ac}}(H_1)=[0,4d].\end{align*} 

For Cayley graphs of abelian groups it is also known that the Laplacian has purely continuous spectrum, \cite{higuchi}. These graphs have polynomial growth and therefore every $ \beta,\mu $ that decays a bit more than one order of the polynomial growth satisfies the assumptions of Corollary~\ref{main3}, showing that the correspoding weighted Laplacians still have a purely continuous spectrum.
\end{Example}

\begin{Example}[Regular trees] Consider the Laplacian $ H_1$ with standard weights on a $ k $-regular tree. The spectrum of this operator is purely absolutely continuous 
	\begin{align*}
	\mathrm{spec}(H_1)=	\mathrm{spec}_{\mathrm{ac}}(H)=[2(k+1)-2\sqrt{k},2(k+1)+2\sqrt{k}].
	\end{align*}  
	This is also folklore and attributed to Furstenberg. It can be seen using the recursion formula of the Green's function to compute the spectral density, see \cite{Klein}.
	Denote the distance spheres with respect to the combinatorial graph distance by $ S_{n} $, $ n\in\N $. Then,$\# S_{n}=k^{n}$, and we may allow for 
	\begin{align*}
	\beta_{n},\mu_{n}\in O(n^{-(1+\varepsilon)}k^{-n})
	\end{align*}
	for some $ \varepsilon>0 $ 
    to apply Corollary~\ref{main3}.
\end{Example}

\begin{Example}[Trees of finite cone type and Galton Watson trees] A tree of finite cone type, \cite{NW}, is generated by substitution matrix $ M\in \N_{0}^{N\times N} $ for some natural number $ N $. Starting with a root for which one has to pick a label from $ \{1,\ldots,N\} $ every vertex with label $ k $ has $ M_{k,l} $ forward neighbors of label $ l $. The Laplacian on these trees has finitely many intervals of  pure absolutely continuous spectrum provided $ M $ is irreducible and has positive diagonal, \cite{KLW,KLW2}.
Denote again the distance spheres with respect to the combinatorial graph distance  by $ S_{n} $, $ n\in\N_{0} $. Then, by elementary considerations one has $	\# S_{n} \sim \gamma^{n}$, where $ \gamma  $ is the largest eigenvalue of $ M $. Hence, we may allow for 
	\begin{align*}
	\beta_{n},\mu_{n}\in O(n^{-(1+\varepsilon)}\gamma^{-n})
	\end{align*}
	for some $ \varepsilon>0 $ 
	to apply Corollary~\ref{main3}.\\
 These considerations may be generalized to certain Galton-Watson trees. In \cite{Keller} it is shown that if the distribution of such a tree is close to a deterministic tree of finite cone type, then the Laplacians shares most of the absolutely continuous spectrum of the deterministic tree. By the arguments above we may perturb the weights on these random trees according to the allowed perturbations for the corresponding tree of finite cone type to ensure stability of the absolutely continuous spectrum.
\end{Example}

\section{Proof of Theorem \ref{main}}

Assume we are given two weighted graphs $(X,b_k,m_{k})$, $k=1,2$, over $X$. With
$$
\tilde{m}_{1,2}:=\sqrt{m_{1,2}}-1/\sqrt{m_{1,2}},\quad \tilde{b}_{1,2}:=\sqrt{b_{1,2}}-1/\sqrt{b_{1,2}},
$$

we define maximally defined multiplication operators

\begin{align*}
&S_{1,2;k}:\ell^2(X,m_k)\longrightarrow \ell^2(X,m_k),\quad S_{1,2;k}f(x)=|\tilde{m}_{1,2}(x)|^{\f{1}{2}}f(x),\\
&\widehat{S_{1,2;k}}:\ell^2(X\times X, b_k )\longrightarrow \ell^2(X\times X, b_k ),\quad  \widehat{S_{1,2;k}}\alpha(x,y)=|\tilde{b}_{1,2}(x,y)|^{\f{1}{2}}\alpha(x,y),\\
&U_{1,2}:\ell^2(X,m_1)\longrightarrow \ell^2(X,m_2),\quad  U_{1,2}f(x)= \mathrm{sgn}\big(\tilde{m}_{1,2}(x)\big)m_{1,2}(x)^{-\f{1}{2}} f(x),\\
&\widehat{U_{1,2}}:\ell^2(X\times X, b_1 )\longrightarrow\ell^2(X\times X, b_2 ),\quad \widehat{U_{1,2}}\alpha(x,y)= \mathrm{sgn}\big(\tilde{b}_{1,2}(x,y)\big) b_{1,2}(x,y)^{-\f{1}{2}}\alpha(x,y). 
\end{align*}

The operators $U_{1,2}$, $\widehat{U_{1,2}}$ are always unitary, and the operators $S_{1,2;k}$, $\widehat{S_{1,2;k}}$ are always self-adjoint and in addition bounded if $m_1\sim m_2$ and $b_1\sim b_2$. Denote with $d_k$ the closure of the operator $d$ considered as an operator from $\ell^2(X,m_k)$ to $ \ell^2 (X\times X,b_k)$, defined initially on finitely supported functions. Then one has $H_k= d_k^* d_k$, with $\dom(d_k)=\dom(\sqrt{H_k})=\dom(Q_k)$, by Proposition \ref{help}. We denote the heat semigroups with
$$
P^{(k)}_s:=\exp(-sH_k)\in \ILL(\ell^2(X,m_k)), \quad s>0,
$$ 
and we further define
$$
\hat{P}^{(k)}_s:=d_kP^{(k)}_s\in \ILL\big(\ell^2(X,m_k),\ell^2(X\times X, b_k)\big), \quad s>0.
$$
In view of $\hat{P}^{(k)}_s=d_k \exp(-sd_k^* d_k )$, the asserted boundedness of $\hat{P}^{(k)}_s$ follows again from Proposition \ref{help}.\\
The main idea for the proof of Theorem \ref{main} is to use the above constructions in order to write the operator $T$ from the abstract Belopol\rq{}skii-Birman Theorem (cf. Section \ref{wave}) essentially as a sum of operators of the form 
$$
(A_1\hat{P}^{(2)}_s)^*X_1A_2\hat{P}^{(1)}_sX_2, \quad (A_3P^{(2)}_s)^*X_3A_4P^{(1)}_sX_4,
$$
 where the $X_j$ are bounded and where the multiplication operators $A_j$ reflect the deviations $m_{1,2}$ and $b_{1,2}$ and will be chosen in a way that 
$$
(A_1\hat{P}^{(2)}_s)^*X_1A_2\hat{P}^{(1)}_sX_2, \quad (A_3P^{(2)}_s)^*X_3A_4P^{(1)}_sX_4
$$
 will be Hilbert-Schmidt. This idea stems from the scattering theory of Riemannian manifolds \cite{HPW, GuenThal, BGM} and has been called \emph{Hempel-Post-Weder formula} in \cite{GuenThal}:

\begin{Lemma}[Discrete HPW formula]\label{HPW} Assume $m_1\sim m_2$, $b_1\sim b_2$, $s>0$. Then defining the bounded operator $T_{1,2,s}:\ell^2(X,m_1)\to \ell^2(X,m_2)$ by
\begin{align*}
T_{1,2,s}:=\big(\widehat{S_{1,2;2}}\widehat{P}^{(2)}_s\big)^{*}\widehat{U_{1,2}}\widehat{S_{1,2;1}}\widehat{P}^{(1)}_s-\big(S_{1,2;2} P^{(2)}_s\big)^{*}U_{1,2}S_{1,2;1} P^{(1)}_{s/2}H_1 P^{(1)}_{s/2},
\end{align*}
the following formula holds for all $f_2\in \dom (H_2)$, $f_1\in \dom (H_2)$,
\begin{align*}
\left\langle f_2 ,T_{1,2,s}f_1\right\rangle_{m_2} =\left\langle H_2f_2, P^{(2)}_s I_{1,2} P^{(1)}_sf_1\right\rangle_{m_2} -\left\langle f_2,P^{(2)}_s I_{1,2}P^{(1)}_s H_1f_1\right\rangle_{m_2}.
\end{align*}
\end{Lemma}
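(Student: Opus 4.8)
The plan is to verify the identity by expanding the right-hand side using the semigroup property and the factorization $H_k = d_k^* d_k$, and then recognizing the two resulting pieces as the two summands of $T_{1,2,s}$ sandwiched between $f_2$ and $f_1$. I would start from the right-hand side and write $P^{(2)}_s = P^{(2)}_{s/2}P^{(2)}_{s/2}$ and similarly $P^{(1)}_s = P^{(1)}_{s/2}P^{(1)}_{s/2}$; since $H_k$ is self-adjoint and commutes with its own semigroup, $H_2 f_2$ paired against $P^{(2)}_s I_{1,2} P^{(1)}_s f_1$ may be rewritten (moving one half of $P^{(2)}_s$ onto the $f_2$ slot) so that the difference takes the shape $\langle P^{(2)}_{s/2} H_2 f_2, \,\cdot\,\rangle - \langle\,\cdot\,, P^{(2)}_{s/2}\text{-stuff}\rangle$, and I would then use the well-known commutator trick: for a self-adjoint operator $H$ with semigroup $P_t = e^{-tH}$, one has $H_2 P^{(2)}_{s/2} - P^{(2)}_{s/2} (\text{conjugated } H_1)$ telescopes into a $d$-level expression via $H_k = d_k^* d_k$.

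More concretely, the key computation is to handle the two Hilbert spaces $\ell^2(X,m_1)$ and $\ell^2(X,m_2)$ via the weight-distortion operators. The point of introducing $S_{1,2;k}$, $U_{1,2}$, $\widehat{S_{1,2;k}}$, $\widehat{U_{1,2}}$ is precisely that they encode the relation $\langle\cdot,\cdot\rangle_{b_2} = \langle b_{1,2}(\cdot),\cdot\rangle_{b_1}$ and $I_{1,2}^* f = m_{1,2}\cdot f$ stated in Section 2: one checks the pointwise algebraic identities
\begin{align*}
I_{1,2}^* I_{1,2} - \mathrm{Id} &= \text{(multiplication by } m_{1,2}-1\text{)}
= S_{1,2;1}^{*}U_{1,2}^{*}\,\text{-type product},\\
d_2^* (b_{1,2}\cdot) d_1 - d_1^* d_1 &= \text{analogous } \widehat{S},\widehat{U}\text{-product at the edge level},
\end{align*}
using that $\sqrt{t}-1/\sqrt{t}$ times $\mathrm{sgn}(\sqrt t - 1/\sqrt t)|\sqrt t - 1/\sqrt t|^{1/2}\cdot t^{-1/2}$ reproduces $t-1$ — this is the reason for the exponent $\tfrac12$ and the signs in the definitions of $S$ and $U$. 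Combining these with $H_k = d_k^* d_k$ and distributing the heat factors $\widehat P^{(k)}_s = d_k P^{(k)}_s$ over the $d$'s yields exactly the operator $T_{1,2,s}$ on the left; the term $P^{(1)}_{s/2} H_1 P^{(1)}_{s/2}$ in the second summand comes from splitting $P^{(1)}_s$ and absorbing $H_1$ from the cross term where the two measures $m_1, m_2$ are compared rather than the two edge weights $b_1,b_2$.

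The domain bookkeeping is the part to be careful about rather than hard: one needs $f_1, f_2 \in \dom(H_k)$ so that $H_k f_k$ makes sense and so that $P^{(k)}_{s/2} f_k \in \dom(H_k) = \dom(d_k^* d_k)$ and $d_k P^{(k)}_{s/2} f_k \in \dom(d_k^*)$; all of this follows from analyticity/smoothing of the heat semigroup together with Proposition~\ref{help} (which gives $H_k = d_k^* d_k$, $\dom(d_k) = \dom(\sqrt{H_k})$, and boundedness of $d_k P^{(k)}_s$), so every manipulation—moving $d_k$ across an inner product via its adjoint, commuting $H_k$ past $P^{(k)}_t$—is justified. I expect the main obstacle to be organizing the algebra so that the $b_{1,2}$-part and the $m_{1,2}$-part separate cleanly into the two stated summands; once the pointwise identities for $I_{1,2}^*I_{1,2}-\mathrm{Id}$ and for the edge-weight change of measure are written down in terms of $S,U,\widehat S,\widehat U$, the rest is a direct, if slightly lengthy, rearrangement, and no genuine analytic difficulty remains.
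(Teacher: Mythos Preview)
Your plan is essentially the paper's own argument: start from the right-hand side, use $H_k=d_k^*d_k$ and the smoothing of the semigroup to pass to $d$-level inner products, and then factor the weight distortions $1-b_{1,2}^{-1}$ and $1-m_{1,2}^{-1}$ through the $S,U,\widehat S,\widehat U$ operators exactly as you describe. The only point where your sketch is looser than the paper is the mechanism that cleanly separates the edge-weight piece from the vertex-weight piece: rather than a ``commutator trick'' or splitting $P^{(2)}_s$ in half, the paper simply moves all of $P^{(2)}_s$ to the left slot and then \emph{adds and subtracts} the single cross term $\langle I_{2,1}P^{(2)}_s f_2,\,H_1 P^{(1)}_s f_1\rangle_{m_1}$; because the difference operator $d$ is independent of the weights, the first two resulting terms collapse to $\langle d P^{(2)}_s f_2,\,d P^{(1)}_s f_1\rangle_{b_2}-\langle d P^{(2)}_s f_2,\,d P^{(1)}_s f_1\rangle_{b_1}$ (the edge part), while the leftover term carries the factor $I_{1,2}-I_{2,1}^*=$ multiplication by $1-m_{1,2}^{-1}$ (the vertex part, where only $P^{(1)}_s$ is split). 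With that one add-and-subtract in place, everything you wrote goes through verbatim.
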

\begin{proof} Adding and subtracting the term $\left\langle I_{2,1}  f_2, H_1 P^{(1)}_sf_1\right\rangle_{m_1}$ we get
\begin{align*}
&\left\langle H_2f_2, P^{(2)}_s I_{1,2} P^{(1)}_sf_1\right\rangle_{m_2} -\left\langle f_2, P^{(2)}_s I_{1,2} P^{(1)}_s H_1f_1\right\rangle_{m_2}\\
&=\left\langle H_2P^{(2)}_sf_2,  I_{1,2} P^{(1)}_sf_1\right\rangle_{m_2} -\left\langle I_{2,1}P^{(2)}_sf_2,   H_1 P^{(1)}_sf_1\right\rangle_{m_1}\\
&\quad- \left\langle  P^{(2)}_s f_2, \big(I_{1,2}-I_{2,1}^*\big) H_1 P^{(1)}_sf_1\right\rangle_{m_1}\\
&=\left\langle \Id_2P^{(2)}_sf_2,  \Id_2 I_{1,2} P^{(1)}_sf_1\right\rangle_{b_2} -\left\langle \Id_1 I_{2,1}P^{(2)}_sf_2,   \Id_1 P^{(1)}_sf_1\right\rangle_{b_1}\\
&\quad- \left\langle  P^{(2)}_s f_2, \big(I_{1,2}-I_{2,1}^*\big) H_1 P^{(1)}_sf_1\right\rangle_{m_2}.
\end{align*}
Using $\langle{\cdot},{\cdot  }\rangle_{b_1}=\langle{b_{1,2}^{-1}(\cdot)},{\cdot  }\rangle_{b_1}$ and
$$
1-b_{1,2}^{-1}=b_{1,2}^{-\frac{1}{2}}(b_{1,2}^{\frac{1}{2}}-b_{1,2}^{-\frac{1}{2}})=b_{1,2}^{-\frac{1}{2}}\tilde{b}_{1,2}=b_{1,2}^{-\frac{1}{2}}|\tilde{b}_{1,2}|\mathrm{sgn}(\tilde{b}_{1,2}),
$$
the first two terms combine to
\begin{align*}
&\left\langle \Id_2P^{(2)}_sf_2,  \Id_2 I_{1,2} P^{(1)}_sf_1\right\rangle_{b_2} -\left\langle \Id_1 I_{2,1}P^{(2)}_sf_2,   \Id_1 P^{(1)}_sf_1\right\rangle_{b_1}\\
&=\sum_{X\times X} \Id P^{(2)}_sf_2 \cdot  \overline{\Id    P^{(1)}_sf_1} \cdot b_{2} \cdot(1-b_{1,2}^{-1}) \\
&=\sum_{X\times X}    \Id P^{(2)}_sf_2 \cdot \overline{|\tilde{b}_{1,2} |^{\f{1}{2}}\cdot\mathrm{sgn}\big(\tilde{b}_{1,2} \big) b_{1,2}^{-\f{1}{2}}\cdot|\tilde{b}_{1,2} |^{\f{1}{2}} \cdot\Id  P^{(1)}_sf_1} \cdot b_{2}\\
&=\left\langle  \widehat{P}^{(2)}_sf_2 \ , \ \widehat{S}_{1,2;2} \widehat{U_{1,2}} \widehat{S}_{1,2;1} \widehat{P}^{(1)}_s f_1   \right\rangle_{b_2}\\
&=\left\langle f_2 \ , \ (\widehat{P}^{(2)}_s)^* \widehat{S}_{1,2;2} \widehat{U}_{1,2} \widehat{S}_{1,2;1} \widehat{P}^{(1)}_s f_1   \right\rangle_{b_2}.
\end{align*}
Likewise, one has
\begin{align*}
&\left\langle  P^{(2)}_s f_2, \big(I_{1,2}-I_{2,1}^*\big) H_1 P^{(1)}_sf_1\right\rangle_{m_2}\\
&=\sum_X P^{(2)}_s f_2 \cdot  \big(1-m_{1,2}^{-1}\big) \cdot\overline{H_1 P^{(1)}_sf_1}\cdot m_2\\
&=\sum_X P^{(2)}_s f_2   \cdot\overline{|\tilde{m}_{1,2}|^{\f{1}{2}}\cdot\mathrm{sgn}\big(\tilde{m}_{1,2}\big)m_{1,2}^{-\f{1}{2}} \cdot|\tilde{m}_{1,2}|^{\f{1}{2}}H_1 P^{(1)}_sf_1}\cdot m_2\\
&=\left\langle   f_2,    P^{(2)}_s  S_{1,2;2} U_{1,2}S_{1,2;1}   H_2 P^{(1)}_{s} f_1    \right\rangle_{m_2}\\
&=\left\langle   f_2,    P^{(2)}_s  S_{1,2;2} U_{1,2}S_{1,2;1} P^{(1)}_{s/2} H_2 P^{(1)}_{s/2} f_1    \right\rangle_{m_2},
\end{align*}
completing the proof.
\end{proof}

\begin{Lemma}\label{gradient} For every $s>0$ the operator 
$$
\widehat{P}^{(k)}_s=d_kP^{(k)}_s\in \ILL\big(\ell^2(X,m_k),\ell^2(X\times X, b_k)\big)
$$
is an integral operator having a uniquely determined integral kernel 
$$
\hat{P}^{(k)}_s(\cdot,\cdot,\cdot):X\times X\times X\longrightarrow \IC
$$
which for all $f\in \ell^2(X,m_k)$, $x,y\in X$ satisfies
\begin{align*}
&\hat{P}^{(k)}_sf(x,y)=\sum_{z\in X}\hat{P}^{(k)}_s(x,y,z) f(z)m_k(z),\\
 &\sum_{z\in X}|\hat{P}^{(k)}_s(x,y,z)|^2 m_k(z)\leq 2P^{(k)}_{2s}(x,x)+2P^{(k)}_{2s}(y,y).
\end{align*}
\end{Lemma}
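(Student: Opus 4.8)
The plan is to produce an explicit kernel by first writing down the heat kernel of $P^{(k)}_s$ and then differentiating it in the first variable. Let $\delta_z\in\ell^2(X,m_k)$ denote the indicator of $\{z\}$, so that $g(z)=\langle g,\delta_z\rangle_{m_k}/m_k(z)$ for every $g\in\ell^2(X,m_k)$. Using self-adjointness of $P^{(k)}_s$ one computes $(P^{(k)}_s\delta_z)(x)m_k(x)=\langle P^{(k)}_s\delta_z,\delta_x\rangle_{m_k}=\langle\delta_z,P^{(k)}_s\delta_x\rangle_{m_k}=\overline{(P^{(k)}_s\delta_x)(z)}\,m_k(z)$, so that
$$
P^{(k)}_s(x,z):=\frac{(P^{(k)}_s\delta_z)(x)}{m_k(z)}=\frac{\overline{(P^{(k)}_s\delta_x)(z)}}{m_k(x)}
$$
defines a Hermitian kernel, and by Cauchy--Schwarz (using $P^{(k)}_s\delta_x\in\ell^2(X,m_k)$) the series $\sum_{z}P^{(k)}_s(x,z)f(z)m_k(z)$ converges absolutely to $(P^{(k)}_sf)(x)$ for every $f\in\ell^2(X,m_k)$. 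From the second representation one moreover gets
$$
\sum_{z\in X}\big|P^{(k)}_s(x,z)\big|^2 m_k(z)=\frac{\|P^{(k)}_s\delta_x\|_{m_k}^2}{m_k(x)^2}=\frac{\langle\delta_x,P^{(k)}_{2s}\delta_x\rangle_{m_k}}{m_k(x)^2}=P^{(k)}_{2s}(x,x),
$$
using $P^{(k)}_s$ self-adjoint together with $P^{(k)}_sP^{(k)}_s=P^{(k)}_{2s}$ in the middle, and $\langle\delta_x,P^{(k)}_{2s}\delta_x\rangle_{m_k}=P^{(k)}_{2s}(x,x)m_k(x)^2$ at the end.

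Next I would identify $\widehat{P}^{(k)}_s$ with the pointwise differential of $P^{(k)}_sf$. By Proposition~\ref{help} one has $\dom(d_k)=\dom(\sqrt{H_k})\supseteq\dom(H_k)\ni P^{(k)}_sf$, so $\widehat{P}^{(k)}_sf=d_k\big(P^{(k)}_sf\big)$ is well defined. For any $g\in\dom(d_k)$ choose $g_n\in C_c(X)$ with $g_n\to g$ in $\ell^2(X,m_k)$ and $dg_n\to d_kg$ in $\ell^2(X\times X,b_k)$; since $m_k$ is strictly positive on points and $b_k$ is strictly positive on the edges, $\ell^2$-convergence entails pointwise convergence, so $dg_n(x,y)=g_n(x)-g_n(y)\to g(x)-g(y)$ while also $dg_n(x,y)\to(d_kg)(x,y)$ whenever $b_k(x,y)>0$. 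Hence $d_kg$ is represented by the everywhere-defined function $(x,y)\mapsto g(x)-g(y)$, and in particular
$$
\widehat{P}^{(k)}_sf(x,y)=(P^{(k)}_sf)(x)-(P^{(k)}_sf)(y)=\sum_{z\in X}\big(P^{(k)}_s(x,z)-P^{(k)}_s(y,z)\big)f(z)m_k(z),
$$
the last step by absolute convergence of both series. Thus $\hat{P}^{(k)}_s(x,y,z):=P^{(k)}_s(x,z)-P^{(k)}_s(y,z)$ is an integral kernel for $\widehat{P}^{(k)}_s$. Uniqueness is immediate: any kernel $K$ with $\sum_z K(x,y,z)f(z)m_k(z)=\widehat{P}^{(k)}_sf(x,y)$ for all $f,x,y$ must, on inserting $f=\delta_z$ and using $m_k(z)>0$, satisfy $K(x,y,z)=\hat{P}^{(k)}_s(x,y,z)$.

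Finally the bound drops out: by $|a-b|^2\le 2|a|^2+2|b|^2$ and the first step,
$$
\sum_{z\in X}\big|\hat{P}^{(k)}_s(x,y,z)\big|^2 m_k(z)\le 2\sum_{z\in X}\big|P^{(k)}_s(x,z)\big|^2 m_k(z)+2\sum_{z\in X}\big|P^{(k)}_s(y,z)\big|^2 m_k(z)=2P^{(k)}_{2s}(x,x)+2P^{(k)}_{2s}(y,y).
$$
There is no genuine obstacle here. The only point to watch is fixing the kernel normalization so that the displayed identity for $\sum_z|P^{(k)}_s(x,z)|^2m_k(z)$ really follows from $P^{(k)}_sP^{(k)}_s=P^{(k)}_{2s}$ and self-adjointness; after that the estimate is one line, and the identification of $d_k$ with the pointwise difference on its domain is the standard closure argument.
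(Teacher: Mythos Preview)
Your proof is correct and rests on the same ingredients as the paper's: the semigroup identity $\sum_z|P^{(k)}_s(x,z)|^2 m_k(z)=P^{(k)}_{2s}(x,x)$, the pointwise identification $d_kP^{(k)}_sf(x,y)=P^{(k)}_sf(x)-P^{(k)}_sf(y)$, and the inequality $|a-b|^2\le 2|a|^2+2|b|^2$. The only organizational difference is that the paper bounds the linear functional $f\mapsto\hat P^{(k)}_sf(x,y)$ by $P^{(k)}_{2s}(x,x)^{1/2}+P^{(k)}_{2s}(y,y)^{1/2}$ and then invokes Riesz representation to get an (abstract) kernel with that $\ell^2$-norm, whereas you write the kernel down explicitly as $\hat P^{(k)}_s(x,y,z)=P^{(k)}_s(x,z)-P^{(k)}_s(y,z)$ and bound it directly; your route has the minor advantage of actually exhibiting the kernel.
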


\begin{proof} By Riesz-Fischer's duality theorem it suffices to prove that for all $f\in \ell^2(X,m_k)$ and all $x,y\in X$ one has 
$$
|\hat{P}^{(k)}_sf(x,y)|\leq ( P^{(k)}_{2s}(x,x)^{1/2}+P^{(k)}_{2s}(y,y)^{1/2})\left\|f\right\|_{m_k},
$$
showing that for fixed $s,x,y$ the linear functional $f\mapsto \hat{P}^{(k)}_sf(x,y)$ on $\ell^2(X,m_k)$ is bounded with bound 
$$
\leq  P^{(k)}_{2s}(x,x)^{1/2}+P^{(k)}_{2s}(y,y)^{1/2} .
$$ 
The latter estimate follows from the trivial bound
$$
|\hat{P}^{(k)}_sf(x,y)|= |dP^{(k)}_sf(x,y)|\leq |P^{(k)}_sf(x)|+|P^{(k)}_sf(y)|
$$
and 
\begin{align*}
&|P^{(k)}_sf(z)| \leq \sum_{z\rq{}\in X}P^{(k)}_s(z,z\rq{})|f(z\rq{})|m_k(z\rq{}) \\
&\leq \left(\sum_{z\rq{}\in X}P^{(k)}_s(z,z\rq{})^2m_k(z\rq{})\right)^{1/2}\left\|f\right\|_{m_k}=\sqrt{P^{(k)}_{2s}(z,z)}\left\|f\right\|_{m_k}
\end{align*}
by the semigroup property.
\end{proof}

Using Belopol\rq{}skii-Birman\rq{}s theorem (cf. section \ref{wave} below) we can now give the

\begin{proof}[Proof of Theorem \ref{main}] Firstly, as we have noted $m_1\sim m_2$ and $b_1\sim b_2$ implies 
$$
I_{1,2}(\dom(Q_1))=\dom(Q_2).
$$ 
The operator 
$$
A:=\big(I_{1,2}^*I_{1,2}-\mathrm{id}_{\ell^2(X,m_1)}\big)P^{(1)}_s
$$
is Hilbert-Schmidt, and thus compact: As the function $r\mapsto P^{(1)}_{r}(x,x)$ is monotonously decreasing, this follows from
$$
A(x,y)=(m_{1,2}(x)-1)P^{(1)}_s(x,y), \quad \sum_{y\in X}P^{(1)}_s(x,y)^2m_1(y) \leq P^{(1)}_{2s}(x,x),
$$
and thus 
\begin{align*}
&\sum_{x\in X}\sum_{y\in Y} |A(x,y)|^2 m_1(y) m_1(x)\leq \sum_{x\in X}   |m_{1,2}(x)-1|^2 P^{(1)}_{s/2 }(x,x)m_1(x)\\
&\leq  C \sum_{x\in X}   |m_{1,2}(x)^{\f{1}{2}}-m_{1,2}(x)^{-\f{1}{2}}|P^{(1)}_{s/2 }(x,x)m_1(x)<\infty,
\end{align*}
noting that by $m_1\sim m_{2}$ and $1/m_{1,2}=m_{2,1}$ the functions $m_{1,2}$ and $1/m_{1,2}$ are bounded.\\
The proof is finished, once we have shown that the operator $$ T_{1,2,s}=\big(\widehat{S_{1,2;2}}\widehat{P}^{(2)}_s\big)^{*}\widehat{U_{1,2}}\widehat{S_{1,2;1}}\widehat{P}^{(1)}_s-\big(S_{1,2;2} P^{(2)}_s\big)^{*}U_{1,2}S_{1,2;1} P^{(1)}_{s/2}H_1 P^{(1)}_{s/2}, $$ from Lemma \ref{HPW} is trace class. To see this, note that the product of Hilbert-Schmidt operators is trace class, that the product of a bounded operator and a Hilbert-Schmidt operator is Hilbert-Schmidt, and that adjoints of Hilbert-Schmidt operators also have this property. Thus, since $ U_{1,2} $, $ \widehat U_{1,2} $ are unitary and $ H_{1}P_{s/2}^{(1)} $ is bounded, it suffices to prove that the operators
$$
B_k:=\widehat{S_{1,2;k}}\widehat{P}^{(k)}_s,\quad k=1,2, 
$$
and 
$$
 C_1:= S_{1,2;1}P^{(1)}_{s/2},\quad C_2:= S_{1,2;2}P^{(2)}_{s}
$$
are Hilbert-Schmidt. For the former, this follows from 
\begin{align*}
&B_k(x,y,z)=|b_{1,2}(x)^{\frac{1}{2}}(x,y)-b_{1,2}(x,y)^{-\frac{1}{2}}|^{\frac{1}{2}}\widehat{P}^{(k)}_s(x,y,z),\\
&\sum_{z\in X}|\widehat{P}^{(k)}_s(x,y,z)|^2 m_k(z)\leq 2P^{(k)}_{2s}(x,x)+2P^{(k)}_{2s}(y,y),
\end{align*}
where we have used Lemma \ref{gradient}, and so by the assumptions of the theorem
\begin{align*}
&\sum_{x\in X}\sum_{y\in X}\sum_{z\in X} |B_k(x,y,z)|^2   m_k(z) b_k(x,y)\\
&\leq 2\sum_{x\in X}\sum_{y\in X}\sum_{z\in X} \big|b_{1,2}(x)^{\frac{1}{2}}(x,y)-b_{1,2}(x,y)^{-\frac{1}{2}}\big|  \big( P^{(k)}_{s/2}(x,x)+ P^{(k)}_{s/2}(y,y)\big)  b_k(x,y)
<\infty.
\end{align*}

The Hilbert-Schmidt property of $C_2$ follows from
\begin{align*}
&C_2(x,y)=|m_{1,2}(x)^{\frac{1}{2}}-m_{1,2}(x)^{-\frac{1}{2}}|^{\frac{1}{2}}P^{(2)}_s(x,y) ,\qquad  \sum_{y\in X}P^{(2)}_s(x,y)m_2(y) \leq  P^{(2)}_{2s}(x,x) ,
\end{align*}
and so by assumptions of the theorem
\begin{align*}
\sum_{x\in X}\sum_{y\in X}|C_2(x,y)|^2 m_2(x)m_2(y)\leq \sum_{x\in X}|m_{1,2}(x)^{\frac{1}{2}}-m_{1,2}(x)^{-\frac{1}{2}}|P^{(2)}_{s/2}(x,x) m_2(x)<\infty.
\end{align*}
Likewise one finds
\begin{align*}
\sum_{x\in X}\sum_{y\in X}|C_1(x,y)|^2 m_1(x)m_1(y)&\leq \sum_{x\in X}|m_{1,2}(x)^{\frac{1}{2}}-m_{1,2}(x)^{-\frac{1}{2}}|P^{(1)}_{s/2}(x,x)m_1(x) <\infty,
\end{align*}
finishing the proof of Theorem \ref{main}.
\end{proof}

\appendix 
\section{Belopol\rq{}skii-Birman theorem}\label{wave}


\begin{Theorem}\emph{(Belopol\rq{}skii-Birman)} For $k=1,2$, let $H_k\geq 0$ be a self-adjoint operator in a complex Hilbert space $\IHH_k$, where in the sequel $\pi_{\mathrm{ac}}(H_k)$ denotes the projection onto the $H_k$-absolutely continuous subspace of $\IHH_k$. Assume that $I\in\ILL(\IHH_1, \IHH_2)$ is such that the following assumptions hold:
\begin{itemize}
\item $I$ has a two-sided bounded inverse
\item One has either $I(\dom(\sqrt{H_1}))=\dom(\sqrt{H_2})$ or $I(\dom(H_1))=\dom(H_2)$.
\item The operator
\begin{align*}
(I^*I-\mathrm{id}_{\IHH_1})\exp(-r H_1):\IHH_1\to\IHH_1 \>\>\text{ is compact for some $r>0$}.
\end{align*}

\item There exists a trace class operator $T:\IHH_1\to \IHH_2$ and a number $s>0$ such that for all $f_2\in\dom(H_2)$, $f_1\in\dom(H_1)$ one has
\begin{align*}
\left\langle f_2 ,Tf_1\right\rangle_{\IHH_2}\>=\>
&\left\langle H_2f_2, \exp(-sH_{2}) I \exp(-sH_{1})f_1\right\rangle_{\IHH_2} \\
&-\left\langle f_2, \exp(-sH_{2}) I \exp(-sH_{1}) H_1f_1\right\rangle_{\IHH_2}. 
\end{align*}

\end{itemize}
Then the wave operators 
$$
W_{\pm}(H_{2},H_1, I)=\slim_{t\to\pm\infty}\exp(itH_{2})I\exp(-itH_{1})\pi_{\mathrm{ac}}(H_1)
$$
exist\footnote{$\slim_{t\to\pm\infty}$ stands for the strong limit.} and are complete, where completeness means that 
$$
\left(\mathrm{Ker} \: W_{\pm}(H_{2},H_1, I)\right)^{\perp}=\mathrm{Ran}\:  \pi_{\mathrm{ac}}(H_1), \quad\overline{\mathrm{Ran} \: W_{\pm}(H_{2},H_1, I)}=\mathrm{Ran}\:  \pi_{\mathrm{ac}}(H_2).
$$
Moreover, $W_{\pm}\big(H_{2},H_1, I\big)$ are partial isometries with inital space $\mathrm{Ran} \: \pi_{\mathrm{ac}}(H_1)$ and final space $\mathrm{Ran} \: \pi_{\mathrm{ac}}(H_2)$, and one has $\mathrm{spec}_{\mathrm{ac}}(H_1)=\mathrm{spec}_{\mathrm{ac}}(H_2)$.
\end{Theorem}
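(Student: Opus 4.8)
I would treat this as the classical two-Hilbert-space Kato--Birman (Belopol'skii--Birman) theorem and organize the proof in three blocks: (i) a Pearson-type existence statement for a heat-regularized identification; (ii) removal of the regularization (the step that genuinely needs the domain hypothesis); (iii) the isometry/completeness/spectral statements, which use the compactness hypothesis and the invertibility of $I$. Write $E_k$ for the spectral measure of $H_k$ and $\mathscr{H}_{k,\mathrm{ac}}=\mathrm{Ran}\,\pi_{\mathrm{ac}}(H_k)$.

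\textbf{Step 1 (Pearson's theorem for $J:=\exp(-sH_2)\,I\,\exp(-sH_1)$).} The fourth hypothesis says exactly that $T=H_2J-JH_1$ weakly on $\dom(H_2)\times\dom(H_1)$, with $T$ trace class. From the spectral theorem one derives, for $f\in\dom(H_1)$, the Duhamel identity
\[
\exp(itH_2)\,J\,\exp(-itH_1)f-Jf=i\int_0^t \exp(i\tau H_2)\,T\,\exp(-i\tau H_1)f\,d\tau ,
\]
a Riemann integral of a norm-continuous $\mathscr{H}_2$-valued curve. I would then invoke Pearson's trace-class theorem: since $T$ is trace class, $\big\|\int_{t'}^{t}\exp(i\tau H_2)\,T\,\exp(-i\tau H_1)f\,d\tau\big\|\to 0$ as $t,t'\to\pm\infty$ for every $f$ in the dense set of vectors in $\mathscr{H}_{1,\mathrm{ac}}$ whose $H_1$-spectral measure has bounded compactly supported density; factoring $T=A_1A_2^{*}$ with $A_1,A_2$ Hilbert--Schmidt, this rests on the Plancherel estimate $\int_{\mathbb R}\|A_2^{*}\exp(-i\tau H_1)f\|^2\,d\tau\le C_f<\infty$ (boundedness of the a.c.\ density) together with local $H_2$-smoothness of $A_1$. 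By Cook's criterion, $\|\exp(itH_2)J\exp(-itH_1)\|\le\|I\|$, and density, the limits $W_{\pm}(H_2,H_1,J)=\slim_{t\to\pm\infty}\exp(itH_2)\,J\,\exp(-itH_1)\pi_{\mathrm{ac}}(H_1)$ exist.

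\textbf{Step 2 (removing the regularizers).} Put $V_t:=\exp(itH_2)\,I\,\exp(-itH_1)$, so $\|V_t\|\le\|I\|$. Take the domain hypothesis in the form $I(\dom(H_1))=\dom(H_2)$ (the $\sqrt{H_k}$-case is identical with $\sqrt{H_2}$ replacing $H_2$ below). By the closed graph theorem $H_2I(H_1+1)^{-1}$ is bounded, hence $\sup_t\|H_2V_tf\|\le \|H_2I(H_1+1)^{-1}\|\,\|(H_1+1)f\|<\infty$ for every $f\in\dom(H_1)\cap\mathscr{H}_{1,\mathrm{ac}}$. On the other hand, for $f\in\mathscr{H}_{1,\mathrm{ac}}$ with $f=E_1([0,N])f$ one checks $\exp(-sH_2)\,V_tf=\exp(itH_2)\,J\,\exp(-itH_1)\big(\exp(sH_1)f\big)$, which converges by Step 1. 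Splitting $V_tf=E_2([0,N'])V_tf+E_2((N',\infty))V_tf$: the second term has norm $\le N'^{-1}\sup_t\|H_2V_tf\|$, uniformly small, while the first equals $\big(E_2([0,N'])\exp(sH_2)\big)\big(\exp(-sH_2)V_tf\big)$ with $E_2([0,N'])\exp(sH_2)$ bounded, hence convergent. An $\varepsilon/3$ argument shows $V_tf$ is Cauchy; since $\bigcup_N\{f\in\mathscr{H}_{1,\mathrm{ac}}:f=E_1([0,N])f\}$ is dense in $\mathscr{H}_{1,\mathrm{ac}}$ (as $H_1\ge0$) and $\|V_t\|\le\|I\|$, the strong limits $W_{\pm}(H_2,H_1,I)=\slim_{t\to\pm\infty}V_t\,\pi_{\mathrm{ac}}(H_1)$ exist.

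\textbf{Step 3 (isometry, completeness, spectra).} For $f\in\mathscr{H}_{1,\mathrm{ac}}$ lying in $\mathrm{Ran}\,\exp(-rH_1)$, one has $(I^{*}I-\mathrm{id})\exp(-itH_1)f=\big[(I^{*}I-\mathrm{id})\exp(-rH_1)\big]\exp(-itH_1)\exp(rH_1)f\to0$ in norm, since $\exp(-itH_1)(\cdot)\rightharpoonup0$ weakly on $\mathscr{H}_{1,\mathrm{ac}}$ (RAGE) and a compact operator sends weak null sequences to norm null sequences. Hence $\|W_{\pm}(H_2,H_1,I)f\|^2=\lim_t\langle I^{*}I\exp(-itH_1)f,\exp(-itH_1)f\rangle=\|f\|^2$ on a dense set, so $W_{\pm}(H_2,H_1,I)$ is a partial isometry with initial space $\mathscr{H}_{1,\mathrm{ac}}$. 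Because $I$ has a bounded inverse, re-running Steps 1--2 with the two operators interchanged and $I$ replaced by $I^{-1}$ (the needed trace-class and compactness conditions being the $1\leftrightarrow2$-symmetric counterparts) produces $W_{\pm}(H_1,H_2,I^{-1})$, again a partial isometry with initial space $\mathscr{H}_{2,\mathrm{ac}}$. The chain rule for wave operators gives $W_{\pm}(H_1,H_2,I^{-1})\,W_{\pm}(H_2,H_1,I)=W_{\pm}(H_1,H_1,\mathrm{id}_{\mathscr{H}_1})=\pi_{\mathrm{ac}}(H_1)$, and symmetrically; together with the partial-isometry property this forces $\overline{\mathrm{Ran}\,W_{\pm}(H_2,H_1,I)}=\mathscr{H}_{2,\mathrm{ac}}$, i.e.\ completeness. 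Finally the intertwining relation $\exp(itH_2)W_{\pm}=W_{\pm}\exp(itH_1)$ (immediate from the definition) together with completeness exhibits $H_1|_{\mathscr{H}_{1,\mathrm{ac}}}$ and $H_2|_{\mathscr{H}_{2,\mathrm{ac}}}$ as unitarily equivalent, hence $\mathrm{spec}_{\mathrm{ac}}(H_1)=\mathrm{spec}_{\mathrm{ac}}(H_2)$.

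\textbf{Main obstacle.} The analytic heart is Step 1 --- Pearson's trace-class theorem --- specifically the passage from \emph{$T$ trace class} to the Cook-type estimate, which uses the Fourier analysis of absolutely continuous spectral measures (boundedness of the density on a dense set) and a local-smoothness factorization of $T$; I would cite this from the Kato--Birman literature rather than reprove it. A secondary point that needs care is checking that the hypotheses are genuinely symmetric under interchanging the two graphs, so that the reverse wave operator invoked in Step 3 is indeed available (this is automatic in the concrete setting of this paper, where $\mathscr{H}_1$ and $\mathscr{H}_2$ coincide as sets with equivalent norms).
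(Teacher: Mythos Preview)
The paper does not give an independent proof of this theorem: its entire argument is the two-line citation ``This result is essentially included in Theorem~XI.13 from~[Re3]. A detailed proof is given in~[GuenThal].'' Your sketch therefore goes well beyond what the paper does, and Steps~1--2 together with the isometry part of Step~3 are a reasonable outline of the standard argument (Pearson's trace-class theorem for the regularized identification $J=\exp(-sH_2)I\exp(-sH_1)$, spectral cut-offs plus the domain hypothesis to remove the regularizers, and the compactness of $(I^{*}I-\mathrm{id})\exp(-rH_1)$ combined with Riemann--Lebesgue to obtain the partial-isometry property).

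The genuine gap is in your completeness argument. You obtain $W_{\pm}(H_1,H_2,I^{-1})$ by ``re-running Steps~1--2 with the two operators interchanged and $I$ replaced by $I^{-1}$,'' invoking what you call the $1\leftrightarrow 2$-symmetric counterparts of the trace-class and compactness hypotheses. But the theorem is stated with \emph{one-sided} hypotheses: neither the compactness of $((I^{-1})^{*}I^{-1}-\mathrm{id}_{\mathscr H_2})\exp(-rH_2)$ nor the existence of a trace-class $T'$ attached to $J'=\exp(-sH_1)I^{-1}\exp(-sH_2)$ is assumed, and neither follows from the stated hypotheses in the abstract setting (for instance, $(I^{-1})^{*}I^{-1}-\mathrm{id}_{\mathscr H_2}=(I^{*})^{-1}(\mathrm{id}_{\mathscr H_1}-I^{*}I)I^{-1}$, but you cannot trade $I^{-1}\exp(-rH_2)$ for $\exp(-r'H_1)$ times a bounded operator). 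You acknowledge this yourself as a ``secondary point that needs care'' and note it is automatic in the concrete graph setting of the paper---which is true, since Theorem~2.1 assumes the weighted $\ell^1$ conditions for \emph{both} $k=1,2$---but it is not automatic for the abstract theorem as stated. The standard route to completeness instead exploits that $T^{*}$ is also trace class, so Pearson already yields the reverse regularized wave operator $W_{\pm}(H_1,H_2,J^{*})$ without any extra hypothesis; one then combines the adjoint relation $\langle W_{\pm}(H_2,H_1,J)f,g\rangle=\langle f,W_{\pm}(H_1,H_2,J^{*})g\rangle$ with the already-established isometry and intertwining of $W_{\pm}(H_2,H_1,I)$ to conclude that the range is all of $\mathscr H_{2,\mathrm{ac}}$. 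This is the step that is worked out in the references the paper cites.
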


\begin{proof} This result is essentially included in Theorem XI.13 from \cite{Re3}. A detailed proof is given in \cite{GuenThal}.
\end{proof}

\section{Some facts on quadratic forms in Hilbert spaces}

The following result is certainly well-known. As we have not been able to find a precise reference, we have included a quick proof: 

\begin{Proposition}\label{help} Let $D$ be a densely defined, closed operator from a Hilbert space $\IHH$ to another Hilbert space $\widetilde{\IHH}$. Then the following assertions hold:\\
a) The nonnegative, densely defined sesquilinear form $Q_D$ in $\IHH$ given by
$$
Q_D(f_1,f_2):= \left\|Df_1\right\|^2, \quad \dom (Q_D)= \dom (D),
$$
is closed, and its associated nonnegative self-adjoint operator is $D^*D$.\\
b) For all $t>0$ the operator $D\exp(-t D^*D)$ from $\IHH$ to $\widetilde{\IHH}$ is in $\ILL(\IHH,\widetilde{\IHH})$.
\end{Proposition}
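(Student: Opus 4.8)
The plan is to derive both parts from two classical inputs: von Neumann's theorem (which asserts that $D^*D$ is a nonnegative self-adjoint operator and that $\dom(D^*D)$ is a core for $D$) together with the first representation theorem for closed nonnegative sesquilinear forms, and the spectral calculus for $D^*D$.

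For part a), I would first check closedness of $Q_D$. The form norm $f\mapsto\big(Q_D(f,f)+\|f\|^2\big)^{1/2}$ associated with $Q_D$ is literally the graph norm $f\mapsto\big(\|Df\|^2+\|f\|^2\big)^{1/2}$ of $D$ on $\dom(Q_D)=\dom(D)$; since $D$ is closed, $\dom(D)$ is complete in this norm, which is precisely the statement that the (densely defined, nonnegative) form $Q_D$ is closed. Next I would identify the associated operator. Let $A\geq 0$ be the self-adjoint operator furnished by the representation theorem. For $f\in\dom(D^*D)$ one has $f\in\dom(D)=\dom(Q_D)$ and, for every $h\in\dom(D)$, $Q_D(f,h)=\langle Df,Dh\rangle=\langle D^*Df,h\rangle$ (legitimate since $Df\in\dom(D^*)$); by the characterization of $A$ in the representation theorem this gives $f\in\dom(A)$ with $Af=D^*Df$, i.e.\ $A\supseteq D^*D$. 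As $D^*D$ is self-adjoint by von Neumann and $A$ is self-adjoint, we conclude $A=D^*D$. Finally $\dom(Q_D)=\dom(\sqrt{A})=\dom(\sqrt{D^*D})$ is the general relation between a closed nonnegative form and the square root of its operator, which yields the identity $\dom(Q_{b,m})=\dom(\sqrt{H_{b,m}})$ used in the body. (Alternatively one can bypass the representation theorem: $\dom(D^*D)$ is a core for $D$ in the graph norm and a core for $\sqrt{D^*D}$ in the form norm, and on $\dom(D^*D)$ these two norms coincide because $\|\sqrt{D^*D}f\|^2=\langle D^*Df,f\rangle=\|Df\|^2$; hence the two completions agree.)

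For part b), note first that by the spectral calculus $\exp(-tD^*D)$ maps $\IHH$ into $\dom(D^*D)\subseteq\dom(D)$, so $D\exp(-tD^*D)$ is everywhere defined on $\IHH$. For $f\in\IHH$, writing $g=\exp(-tD^*D)f\in\dom(D^*D)$ and using self-adjointness of $\exp(-tD^*D)$ and the functional calculus,
\[
\|D\exp(-tD^*D)f\|^2=\langle D^*D\,g,g\rangle=\big\langle D^*D\exp(-2tD^*D)f,f\big\rangle\leq\Big(\sup_{\lambda\geq 0}\lambda\,e^{-2t\lambda}\Big)\|f\|^2=\frac{1}{2te}\,\|f\|^2,
\]
so $D\exp(-tD^*D)\in\ILL(\IHH,\widetilde{\IHH})$ with norm at most $(2te)^{-1/2}$. (Equivalently, via the polar decomposition $D=U\sqrt{D^*D}$ one writes $D\exp(-tD^*D)=U\big(\sqrt{D^*D}\exp(-tD^*D)\big)$ and bounds the bracketed operator by $\sup_{\lambda\geq0}\sqrt{\lambda}\,e^{-t\lambda}<\infty$.)

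The calculations are routine; the only step needing care is the identification $A=D^*D$ in part a), where one must invoke self-adjointness of $D^*D$ (von Neumann's theorem) before upgrading the inclusion $A\supseteq D^*D$ to an equality.
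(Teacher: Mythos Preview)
Your proof is correct and close in spirit to the paper's. The one genuine difference is in part a): you establish only the inclusion $A\supseteq D^*D$ and then invoke von Neumann's theorem (self-adjointness of $D^*D$) to upgrade to equality, whereas the paper argues both inclusions directly from the defining property of the operator $H_D$ associated to the closed form---if $f_1\in\dom(H_D)$ then $\langle H_Df_1,f_2\rangle=\langle Df_1,Df_2\rangle$ for all $f_2\in\dom(D)$ forces $Df_1\in\dom(D^*)$ with $D^*Df_1=H_Df_1$, and the converse is checked similarly. The paper's route is thus self-contained (self-adjointness of $D^*D$ is a consequence rather than an input), while yours is slightly shorter once von Neumann is granted as a black box. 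For part b) the paper uses precisely the polar-decomposition argument you list as your alternative; your direct spectral-calculus estimate is equally valid and even yields the explicit bound $(2te)^{-1/2}$.
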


\begin{proof} a) It is checked easily that $Q_D$ is closed. Let $H_D\geq 0$ denote its associated self-adjoint operator. If $f_1\in \dom (H_D)$, then we have $f_1\in \dom (Q_D)=\dom(D)$, and for all $f_2\in \dom(D)$,
$$
\left\langle H_Df_1, f_2\right\rangle=Q_D(f_1,f_2)=\left\langle Df_1, Df_2\right\rangle,
$$
which implies $Df_1\in \dom (D^*)$ and $D^*Df_1=H_D f_1$. Conversely, if $f_1\in \dom(D)= \dom (Q_D)$ with $Df_1\in \dom(D^*)$, then for all $f_2\in \dom(D)= \dom (Q_D)$ we have 
$$
Q_D(f_1,f_2)=\left\langle Df_1, Df_2\right\rangle = \left\langle D^* Df_1, f_2\right\rangle,
$$
which implies $f_1\in \dom (H_D)$ and $H_Df_1=D^*Df_1$.\\
b) Set $H_D:=D^*D$. The polar decomposition of $D$ reads $D= U\sqrt{H_D}$, where $U$ is an everywhere defined operator from $\IHH$ to $\widetilde{\IHH}$ which maps 
$$
\overline{\mathrm{Ran}(\sqrt{H_D})}\longrightarrow \overline{\mathrm{Ran}(D)}\quad\text{  isometrically}. 
$$
Thus, we have 
$$
\left\|D\exp(-t D^*D)\right\|=\left\|U\sqrt{H_D} \exp(-t H_D)\right\|=\left\|\sqrt{H_D} \exp(-t H_D)\right\|,
$$
which is $<\infty$ for all $t>0$ by the spectral calculus.
\end{proof}

\end{document}